\newtheorem{theorem}{Theorem} 
\newtheorem{corollary}[theorem]{Corollary} 
\newtheorem{claim}[theorem]{Claim} 
\newcommand{\margincomment}[1]%
{{\marginpar{{\footnotesize\begin{minipage}{0.75in}%
          \begin{flushleft}%
            {#1}%
          \end{flushleft}%
        \end{minipage}%
      }}}\ignorespaces}
\newcommand{\ignore}[1]{}
\colorlet{darkgreen}{green!45!black}
\newcommand{\etal}{et al.\xspace}
\newcommand{\mycase}[1]{\mbox{{\underline{Case #1}}:\/}}
\newcommand{\braced}[1]{{ \left\{ #1 \right\} }}
\newcommand{\angled}[1]{{ \left\langle #1 \right\rangle }}
\newcommand{\parend}[1]{{ \left(#1 \right) }}
\newcommand{\sfmath}[1]{\operatorname{\text{\normalfont\sffamily #1}}}
\newcommand{\scmath}[1]{\operatorname{\text{\normalfont\scshape #1}}}
\newcommand{\Queries}{{\cal{U}}}
\newcommand{\Keys}{{\cal{K}}}  
\newcommand{\myproblem}[1]{{\ensuremath{\scmath{#1}}}\xspace}
\newcommand{\threeWCST}{\myproblem{3wcst}}
\newcommand{\twoWCST}{\myproblem{2wcst}}
\newcommand{\twoWCSTAllLoc}{{\ensuremath{\twoWCST_{\scmath{loc}}}}\xspace}
\newcommand{\twoWCSTAllNil}{{\ensuremath{\twoWCST_{\scmath{nil}}}}\xspace}
\newcommand{\baralpha}{\alpha}
\newcommand{\barbeta}{\beta}
\newcommand{\depth}{\ensuremath{\sfmath{depth}}}
\newcommand\eps\varepsilon
\newcommand{\subr}[1]{{\textsc{\small #1}}}
\newcommand{\cost}[1]{{{\sfmath{cost}}(#1)}}
\newcommand{\query}{{q}}  
\newcommand{\altquery}{{r}} 
\newcommand{\compnode}[2]{\ensuremath{\angled{\query\,{#1}\,{#2}}}}
\newcommand{\leafnodekey}[1]{\ensuremath{\braced{#1}}}
\newcommand{\leafnodenonkey}[1]{\ensuremath{\parend{#1}}} 
\newcommand{\queries}[1]{\Queries_{#1}}  
\newcommand{\notakey}{{\bot}}
\newcommand{\leaves}[1]{\ensuremath{\sfmath{leaves}(#1)}\xspace}
\newcommand{\weight}[1]{{w_{#1}}} 
\newcommand{\half}{{\textstyle\frac{1}{2}}}
\newcommand{\onefifth}{{\textstyle\frac{1}{5}}}
\newcommand{\threefifths}{{\textstyle\frac{3}{5}}} 
\newcommand{\ninefifths}{{\textstyle\frac{9}{5}}} 
\newcommand{\twelvefifths}{{\textstyle\frac{12}{5}}}
\newenvironment{mycases}[1]
{ \begin{list}{}{%
      \setlength \labelsep {0.5em}
      \setlength \labelwidth {0pt}
      \setlength \listparindent {\parindent}
      \setlength \leftmargin {#1}
      \setlength \rightmargin {0pt}
      \setlength \itemsep {0.7ex}
      \setlength \parskip {2pt}
      \setlength \parsep {1pt}   
      \setlength \labelwidth {-0.5em}
      \setlength \topsep {0pt}
      \setlength \partopsep {0pt}
    }}
{ \end{list}                  } 
\newenvironment{outercases}
{ \begin{mycases}{0em} }
{ \end{mycases}                  } 
\newenvironment{innercases}
{ \begin{mycases}{1em} }
{ \end{mycases}                  } 
\newcommand{\myparagraph}[1]{\smallskip\noindent\textbf{#1}} 
\newcommand{\emparagraph}[1]{\smallskip\noindent\textit{#1}} 
\newcommand{\key}[1]{{\mathsf{#1}}}
\newcommand{\keyA}{{\key A}}
\newcommand{\keyB}{{\key B}}
\newcommand{\keyC}{{\key C}}
\newcommand{\keyD}{{\key D}}
\newcommand{\keyE}{{\key E}}
\newcommand{\treesize}{\footnotesize}
\newcommand{\leafcolor}{red!3}
\newcommand{\interiorcolor}{green!4}
\newcommand{\subtreecolor}{black!4}
\newcommand{\edgecolor}{black!66}
\newcommand{\subtreeparams}[2]{
  \forestset{
    subtree1/.style={
      anchor=north, 
      child anchor=north, 
      outer sep=0pt,
      %
      %
      rounded corners=0.7em, 
      shape=semicircle,
      inner sep=1pt,
      text depth=4pt,
      xscale=#1,
      content format={
        \noexpand\scalebox{#2}[1]{\forestoption{content}}
      },
    },
  }
  \forestset{
    subtree/.style={
      subtree1,
      draw={\edgecolor, thin, dotted}, 
      edge={\edgecolor, thin}, 
      fill=\subtreecolor, 
    },
  }
  \forestset{
    dimtree/.style={
      subtree1,
      draw={gray, thin, dotted},
      edge={gray, thin, dotted}, 
      fill=gray!3,
      text=black!50, 
    },
  }
  \forestset{
    greentree/.style={
      subtree1,
      draw={\edgecolor, thin}, 
      edge={\edgecolor, thin}, 
      fill=\interiorcolor, 
    },
  }
}
\newcommand{\defaultsubtreeparams}{
  \subtreeparams{0.91}{1.1}
}
\tikzstyle{edgeCommon} = [circle, font=\tiny, inner sep=0pt, outer sep=1.5pt] 
\tikzstyle{edgeYes} = [edgeCommon, text depth=0ex, text height=1.2ex, auto=right, node contents={y}, pos=0.35]
\tikzstyle{edgeNo} = [edgeCommon, text depth=0ex, text height=1.2ex, auto=left, node contents={n}, pos=0.4, outer sep=1pt] 
\tikzstyle{edgeZero} = [edgeCommon, auto=right, node contents={0}, pos=0.7] 
\tikzstyle{edgeOne} = [edgeCommon, auto=left, node contents={1}, pos=0.7]
\tikzstyle{edgeLess} = [edgeCommon, auto=right, node contents={$<$}]
\tikzstyle{edgeGreater} = [edgeCommon, auto=left, node contents={$>$}, pos=0.5]
\tikzstyle{edgeEqual} = [edgeCommon, auto=right, node contents={$=$}, pos=0.4]
\tikzset{label distance=-1pt}
\newcommand{%
  \input{TREES/.tex}%
}[1]{%
  \input{TREES/#1.tex}%
}
\colorlet{myblue}{black}          
\begin{document}

\title{On the Cost of Unsuccessful Searches in Search Trees with Two-way Comparisons}


\author{
  Marek Chrobak\thanks{University of California at Riverside. Research supported by NSF grants CCF-1217314 and CCF-1536026}
  \and
 Mordecai Golin\thanks{Hong Kong University of Science and Technology. Research funded by HKUST/RGC grant FSGRF14EG28 and RGC CERG Grant 16208415.}
 \and
J.~Ian Munro\thanks{University of Waterloo. Research funded by NSERC Discovery Grant 8237-2012 and the Canada Research Chairs Programme.}
\and
Neal E.~Young\thanks{University of California at Riverside. Research supported by NSF grant IIS-1619463.}
}

\maketitle






\textbf{}
\begin{abstract}
Search trees are commonly used to implement access operations to a set of stored keys. If this set is static and 
    the probabilities of membership queries are known in advance, then one can precompute an optimal search tree, namely one that minimizes 
    the expected access cost. For a non-key query, a search tree
    can determine its approximate location by returning the inter-key interval containing the query.
  This is in contrast to other dictionary data structures, 
  like hash tables, that only report a failed search.
  We address the question ``what is the additional cost of determining  
  approximate locations for non-key queries''?  We prove that for two-way comparison trees
  this additional cost is at most $1$.  Our proof is based on a novel
  probabilistic argument that involves converting a search tree
  that does not identify non-key queries into a random tree that does.
\end{abstract}


\section{Introduction}%
\label{sec: introduction}


Search trees are among the most fundamental data structures in computer science.
They are used to store a collection of values, called \emph{keys}, and
allow efficient access and updates. The most common operations
on such trees are search queries, where a search for a given query value $q$
needs to return the pointer to the node representing $q$, provided that $q$ is among 
the stored keys. 

In scenarios where the keys and the probabilities of all potential queries are fixed and known in advance,
one can use a \emph{static} search tree, optimized so that
its expected cost for processing search queries is minimized. 
These trees have been studied since the 1960s, including a classic work by
Knuth~\cite{Knuth1971} who developed an $O(n^2)$ dynamic programming
algorithm for trees with three-way comparisons (\threeWCST's).
A three-way comparison ``$q:k$'' between a query value $q$ and a key $k$ has three
possible outcomes: $q < k$, $q = k$, or $q > k$, and thus it may require
two comparisons,  namely ``$q=k$'' and ``$q < k$'',
when implemented in a high-level programming language. This was in fact pointed out by
Knuth himself in the second edition of \emph{``The Art of Computer Programming''}~\cite[\S6.2.2~ex.~33]{Knuth1998}.  
It would be more efficient to have each comparison in the tree correspond to just  one binary comparison.
Nevertheless, trees with two-way comparisons (\twoWCST's) are not as well
understood as \threeWCST's, and the fastest algorithm for computing such optimal trees runs 
in time $\Theta(n^4)$~\cite
{Anderson2002,chrobak_etal_optimal_search_trees_2015,chrobak2015optimal_erratum,chrobak_etal_simple_bcst_algorithm_2019}.

Queries for keys stored in the tree are referred to in the literature as \emph{successful} queries, 
while queries for non-key values are \emph{unsuccessful}.  Every \threeWCST inherently supports both types of queries.
The search for a non-key query $q$ in a \threeWCST determines the ``location'' of $q$
--- the inter-key interval containing $q$.
(By  an \emph{inter-key interval} we mean an inclusion-maximal open interval not containing any key.)
Equivalently, it returns $q$'s \emph{successor} in the key set (if any).
This feature is a by-product of 3-way comparisons --- even if this information is not needed,
the search for $q$ in a \threeWCST produces this information at no additional cost.
In contrast, other commonly used dictionary data structures
(such as hash tables) provide only one bit of information for non-key queries --- that the query is not a key.
This suffices for some applications, for example in parsing,
where one needs to efficiently identify keywords of a programming language. 
In other applications, however, returning the non-key query interval (equivalently, the successor) is useful.
For example, when search values are perturbed keys (say, obtained from inaccurate measurements),
identifying the keys nearest to the query may be important.

With this in mind, it is reasonable to consider two variants of \twoWCST's:
\twoWCSTAllLoc's, which are two-way comparison search trees that return the inter-key interval
of each non-key query (just like \threeWCST's), and \twoWCSTAllNil's, that only return
the ``not a key'' value $\notakey$ to report unsuccessful search (analogous to hash tables). 
Since \twoWCSTAllNil trees provide less information, they can cost less than \twoWCSTAllLoc's. To see why,
consider an  example (see Figure~\ref{fig: example 1}) with keys $\Keys = \{1,2\}$, each with probability
$\nicefrac{1}{5}$. Inter-key intervals $(-\infty,1)$, $(1,2)$, $(2,\infty)$ each have probability $\nicefrac{1}{5}$ as well.  
The optimum \twoWCSTAllLoc tree (which must determine the inter-key 
interval of each non-key query), has cost $\nicefrac{12}{5}$,
while the optimum \twoWCSTAllNil tree (which need only identify non-keys as such) has cost $\nicefrac{9}{5}$.
Note that \twoWCSTAllLoc trees 
are much more constrained; they contain exactly $2n+1$ leaves. \twoWCSTAllNil trees may contain between $n+1$ and $2n+1$ leaves.
(More precisely, these statements hold for \emph{non-redundant} trees --- see Section~\ref{sec: preliminaries}.)

To our knowledge, the first systematic study of \twoWCST's was conducted by
Spuler~\cite{Spuler1994Paper,Spuler1994Thesis}, whose definition matches our definition of \twoWCSTAllNil's.
Prior to that work, Andersson~\cite{andersson_not_on_searching_91} presented some experimental results 
in which using two-way comparisons improved performance. Earlier, various other types of 
search trees called \emph{split trees}, which are essentially restricted variants of \twoWCST's,
were studied in~\cite{Sheil1978,Huang1984,Perl1984,Hester1986,StephenHuang1984}.
(As pointed out in~\cite{chrobak_etal_huangs_algorithm_2018}, the results
in~\cite{Spuler1994Paper,Spuler1994Thesis,StephenHuang1984}  contain some fundamental errors.)


\begin{figure}[t] 
   \centering
  \includegraphics[width=4in]{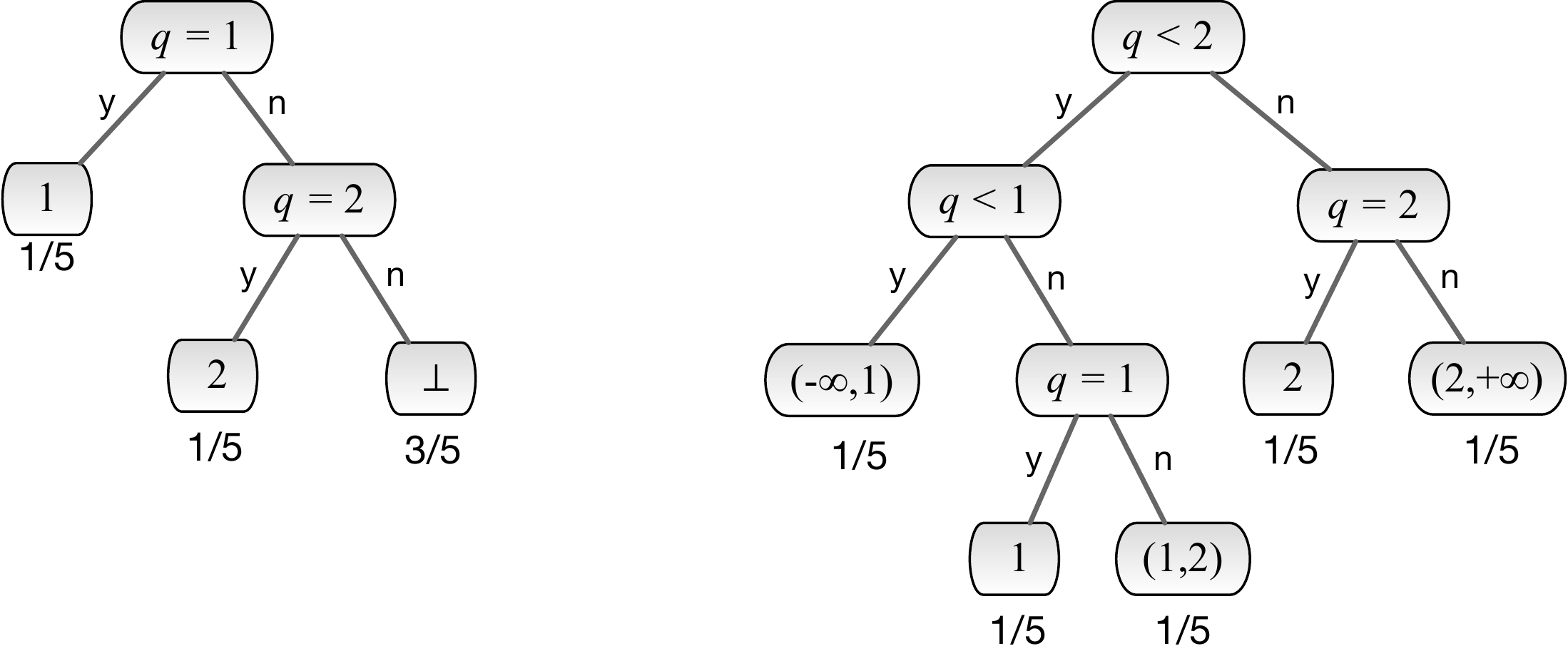}
  \caption{Identifying non-keys can cost more. 
  In this example, all keys and inter-key intervals have probability $\onefifth$.
  The cost of the \twoWCSTAllNil
	on the left is $1\cdot \onefifth + 2 \cdot \onefifth + 2\cdot \threefifths = \ninefifths$.
	The cost of the \twoWCSTAllLoc on the right is
	$2\cdot \onefifth + 3 \cdot \onefifth + 3 \cdot \onefifth + 2\cdot \onefifth +  2\cdot \onefifth = \twelvefifths$.
	We use the standard convention for graphical representation of search trees, with queries in the internal
	nodes, and with search proceeding to the left child if the answer to the query is ``yes'' and to the
	right child if the answer is ``no''.
}\label{fig: example 1}
\end{figure}



\emparagraph{Our contribution.}
The discussion above leads naturally to the following question:
``for two-way comparison search trees,
what is the additional cost of returning locations for all non-key queries''?
We prove that this additional cost is at most $1$.
Specifically (Theorem~\ref{theorem: gap bound} in Section~\ref{sec: gap bound}),
for any \twoWCSTAllNil $T^\ast$ there is
a \twoWCSTAllLoc $T'$ solving the same instance, such that the 
(expected) cost of a query in $T'$ is at most $1$ more than in $T^\ast$. 
We find this result to be somewhat counter-intuitive, since, as illustrated in Figure~\ref{fig: example 1},
a leaf in a \twoWCSTAllNil  may represent queries from multiple (perhaps even all) inter-query
intervals, so in the corresponding \twoWCSTAllLoc it needs to be split into multiple
leaves by adding inequality comparisons, which can significantly increase the tree depth.
The proof uses a probabilistic construction that converts $T^\ast$ into a random \twoWCSTAllLoc $T'$,
  increasing the depth of each leaf by at most one in expectation.

The bound in Theorem~\ref{theorem: gap bound} is tight.
To see why, consider an example with just one key $1$ whose probability is 
$\epsilon \in (0,\half)$ and inter-key intervals  $(-\infty,1)$ and $(1,\infty)$ having probabilities 
$\epsilon$ and $1-2 \epsilon$, respectively.
The optimum \twoWCSTAllNil has cost $1$, while the optimum \twoWCSTAllLoc has cost $2- \epsilon$. 
Taking $\epsilon$ arbitrarily close to $0$ establishes the gap.
(As in the rest of the paper, in this example the allowed comparisons are ``$=$'' and ``$<$'',
but see the discussion in Section~\ref{sec: Final Comments}.)


\emparagraph{Successful-only model.}
Many authors have considered \emph{successful-only} models, in which the trees support key queries but not non-key queries.
For \threeWCST's, Knuth's algorithm~\cite{Knuth1971} can be used for both the all-query and successful-only variants.
For \twoWCST's, in successful-only models the distinction between \twoWCSTAllLoc  and \twoWCSTAllNil does not arise. 
 Alphabetic trees can be considered as \twoWCST trees, in the successful-only model, 
restricted to using only ``$<$'' comparisons.  They can be built in $O(n \log n)$ time~\cite{Hu1971,Garsia1977}.
Anderson~{\etal}~\cite{Anderson2002}
gave an $O(n^4)$-time algorithm for successful-only \twoWCST's that use ``$<$'' and ``$=$'' comparisons.
With some effort, their algorithm can be extended to handle non-key queries too.
A simpler and equally fast algorithm, handling all 
variants of \twoWCST's (successful only, \twoWCSTAllNil, or \twoWCSTAllLoc's) was recently
described in~\cite{chrobak_etal_simple_bcst_algorithm_2019}.  


\emparagraph{Application to entropy bounds.}
For most types of two-way comparison search trees, the entropy of the distribution
is a lower bound on the optimum cost.
This bound has been widely used, for example to analyze approximation algorithms
(e.g.~\cite{mehlhorn_nearly_1975,Yeung1991,chrobak_etal_optimal_search_trees_2015,chrobak2015optimal_erratum}).
Its applicability to ``20-Questions''-style games (closely related to constructing \twoWCST's) was recently studied 
by Dagal~{\etal}~\cite{Dagan_etal_twenty_short_questions_stoc_2017,Dagan_etal_twenty_short_questions_journal_2019}.
But the entropy bound does not apply directly to \twoWCSTAllNil's. 
Section~\ref{sec: Entropy} explains why,
and how, with Theorem~\ref{theorem: gap bound}, it can be applied to such trees.


  \emparagraph{Other gap bounds.}
  To our knowledge, the gap between \twoWCSTAllLoc's and \twoWCSTAllNil's has not previously been considered.
  But gaps between other classes of search trees have been studied.
  Andersson~\cite{andersson_not_on_searching_91}  observed that for any depth-$d$ \threeWCST,
  there is an equivalent \twoWCSTAllLoc of depth at most $d+1$.
  Gilbert and Moore~\cite{Gilbert1959} showed that for any successful-only \twoWCST
  (using arbitrary binary comparisons), there is one using only ``$<$'' comparisons
  that costs at most 2 more.  This was improved slightly by Yeung~\cite{Yeung1991}.
  Anderson et al.~\cite[Theorem 11]{Anderson2002} showed that
  for any successful-only \twoWCST that uses ``$<$'' and ``$=$'' comparisons,
  there is one using only ``$<$'' comparisons that costs at most 1 more.
  Chrobak et al.~\cite[Theorem 2]{chrobak_etal_optimal_search_trees_2015,chrobak2015optimal_erratum}  leveraged Yeung's result
  to show that for any \twoWCSTAllLoc (of any kind, using arbitrary binary comparisons)
  there is one using only ``$<$'' and ``$=$'' comparisons  that costs at most 3 more.
  The trees guaranteed to exist by the gap bounds in~\cite  {Gilbert1959,Yeung1991,Anderson2002,chrobak_etal_optimal_search_trees_2015,chrobak2015optimal_erratum} 
  can be computed in $O(n\log n)$ time, whereas the fastest algorithms known for computing
  their optimal counterparts take time $\Theta(n^4)$.



\section{Preliminaries}%
\label{sec: preliminaries}


Without loss of generality, throughout the paper 
assume that the set of keys is $\Keys = \braced{1,2,\ldots, n}$ (with $n\ge 0$) and
that all queries are from the open interval $\Queries = (0,n+1)$. 

In a \twoWCSTAllLoc $T$ each internal node
represents a comparison between the query value, denoted by $\query$, and a key $k\in\Keys$.
There are two types of comparison nodes: equality comparison nodes $\compnode = k$, and
inequality comparison nodes $\compnode < k$.   
Each comparison node in $T$ has two children, one left and one right,
that correspond to the ``yes'' and ``no'' outcomes of the comparison, respectively. 
For each key $k$ there is a leaf $\leafnodekey k$ in $T$ and for
each $i\in \{0, 1, \ldots, n\}$ there is a leaf identified by open interval $\leafnodenonkey{i,i+1}$.
For any node $N$ of $T$, the subtree of $T$ rooted at $N$
(that is, induced by $N$ and its descendants) is denoted $T_N$.

Consider a query $q\in \Queries$.
A search for $q$ in a  \twoWCSTAllLoc $T$ starts at the root node of $T$ and follows a path
from the root towards a leaf. At each step, if the current node is a comparison
$\compnode = k$ or $\compnode < k$, if the outcome is ``yes'' then
the search proceeds to the left child, otherwise it proceeds to the right child.
A tree is correct if each query $q\in \Queries$ reaches a leaf $\ell$ such that $q\in \ell$.
Note that in a  \twoWCSTAllLoc there must be a comparison node  $\compnode = k$ for each key $k\in\Keys$.

The input is specified by a probability distribution $(\baralpha, \barbeta)$ on queries,
where, for each key $k\in\Keys$, the probability that $q = k$ is $\beta_k$ and
for each $i \in\{0,1,\ldots, n\}$ the probability that $q\in (i,i+1)$ is $\alpha_i$.
As the set of queries is fixed, the instance is uniquely determined by $(\baralpha, \barbeta)$.
The cost of a given query $\query$ is the number of comparisons in a search for $\query$ in $T$,
and the cost of tree $T$, denoted $\cost T$, is the expected cost of a random query $\query$.
(Naturally, $\cost T$ depends on $(\baralpha,\barbeta)$,
but the instance is always understood from context, so is omitted from the notation.)   
More specifically, for any query $\query\in \Queries$,  let $\depth_T(\query)$ 
denote the \emph{query depth} of $\query$ in $T$ ---
the number of comparisons made by a search in $T$ for $\query$.    
Then $\cost T$ is the expected value of $\depth_T(\query)$, where random query $\query$
is chosen according to the query distribution $(\baralpha,\barbeta)$.

The definition of \twoWCSTAllNil's  is similar to \twoWCSTAllLoc's.
The only difference is that non-key leaves do not represent the inter-key interval of the query:
in a \twoWCSTAllNil, each leaf either represents a key $k$ as before,
or is marked with the special symbol $\notakey$, representing any non-key query.
A \twoWCSTAllNil may have multiple leaves marked $\notakey$,
and searches for queries in different inter-key intervals   may terminate at the same leaf.

Also, the above definitions permit any key (or inter-key interval, for  \twoWCSTAllLoc's)
to have more than one associated leaf, in which case the tree is \emph{redundant}.
Formally, for any node $N$, denote by $\queries N$ the set of query values whose
search reaches $N$. (For the root, $\queries N = \Queries$.)
Call a node $N$ of $T$ \emph{redundant} if $\queries N = \emptyset$.  Define tree $T$
to be \emph{redundant} if it contains at least one redundant node. 
There is always an optimal tree that is non-redundant:
any redundant tree $T$ can be made non-redundant, without increasing cost, by
splicing out parents of redundant nodes. (If $N$ is redundant, 
replace its parent $M$ by the sibling of $N$, removing $M$ and $T_N$.)
But in the proof of Theorem~\ref{theorem: gap bound} it is technically useful to allow redundant trees.

For any non-redundant \twoWCSTAllLoc tree $T$,
the cost is conventionally expressed in terms of leaf \emph{weights}:
each key leaf $N = \leafnodekey k$ has weight $\weight N = \beta_k$,
while each non-key leaf $N = \leafnodenonkey{i,i+1}$ has weight $\weight N = \alpha_i$.
In this notation, letting $\leaves{T}$ denote the set of leaves of $T$ 
and $\depth_T(N)$ denote the depth of a node $N$ in $T$, 
\begin{equation}
  \cost T \,=\,  \sum_{L \in\, \leaves T } \weight{L}\cdot\depth_T(L)
  .\label{eqn: cost function}
\end{equation}
But the proof of Theorem~\ref{theorem: gap bound} uses only the earlier definition of cost,
which applies in all cases (redundant or non-redundant, \twoWCSTAllLoc or \twoWCSTAllNil).



\section{The Gap Bound}%
\label{sec: gap bound}


This section proves Theorem~\ref{theorem: gap bound}, that the additional
cost of returning locations of non-key queries is at most $1$. The proof uses a
probabilistic construction that converts any \twoWCSTAllNil
(a tree that does not identify locations of unsuccessful queries)
into a random \twoWCSTAllLoc (a tree that does). The conversion increases
the depth of each leaf by at most 1 in expectation, so increases the tree cost by at most $1$ in expectation.


\begin{theorem}\label{theorem: gap bound}~~~%
  Fix some instance $(\baralpha,\barbeta)$.
  For any  \twoWCSTAllNil tree $T^\ast$ for $(\baralpha,\barbeta)$ there is a
  \twoWCSTAllLoc tree $T'$ for $(\baralpha,\barbeta)$ such that  $\cost{T'}\le \cost{T^\ast}+1$.
\end{theorem}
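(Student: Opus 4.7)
The plan is to prove the theorem by the probabilistic method: construct a random \twoWCSTAllLoc tree $T'$ from $T^\ast$ and show that $\mathbb{E}[\cost{T'}] \le \cost{T^\ast} + 1$; averaging over the randomness then produces a deterministic $T'$ attaining the bound.

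\textbf{Structural setup.} First I would unpack the structure of the $\notakey$-leaves of $T^\ast$. For such a leaf $L$, the inequality comparisons on the root-to-$L$ path determine some range $(a_L,b_L)$; since $T^\ast$ must route every key query $q = k$ with $k\in(a_L,b_L)$ to $\{k\}$ rather than to $L$, every such key must be tested for equality somewhere along the path. Hence $L$ is reached by exactly the $b_L-a_L$ consecutive inter-key intervals contained in $(a_L,b_L)$, and the path to $L$ carries at least $b_L-a_L-1$ equality comparisons targeting keys inside this block.

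\textbf{Random construction.} Next I would describe a top-down randomized conversion of $T^\ast$ into $T'$. A natural candidate: at each equality node $v:\query=k$ of $T^\ast$, independently flip a biased coin and with some probability $p_v$ ``upgrade'' $v$ by inserting a new inequality comparison $\query<k$ on the no-branch of $v$, so that queries with $q<k$ and $q>k$ are routed to two separate copies of the original no-subtree. This upgrade costs one extra comparison for every query that takes the no-branch at $v$, but conversely splits each downstream $\notakey$-leaf $L$ along $k$ whenever $k$ lies in $(a_L,b_L)$. After all upgrades are applied, each $\notakey$-leaf $L$ of $T^\ast$ has been partitioned into sub-blocks, and each sub-block is resolved by a short deterministic distinguishing subtree using only the un-upgraded keys of the block.

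\textbf{Analysis and the main obstacle.} For each query $q$ I would bound
\[
  \mathbb{E}\bigl[\depth_{T'}(q)\bigr] - \depth_{T^\ast}(q)
  \;\le\; \mathbb{E}\bigl[\text{upgrades on the path of }q\bigr]
  \;+\; \mathbb{E}\bigl[\text{depth of }q\text{ in its terminal distinguishing subtree}\bigr],
\]
then take the weighted sum against $(\baralpha,\barbeta)$ and conclude by linearity of expectation. The hardest step, and the technical heart of the proof, will be to choose the probabilities $\{p_v\}$ so that this expectation is uniformly at most $1$. A uniform probability $p_v=p$ only yields a bound of the form $p\cdot|E_L| + O(\log(1/p))$ for queries ending at leaf $L$, which is $\Omega(\log|E_L|)$ and therefore too weak for leaves on long paths. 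Closing the gap to the constant $1$ will require making the probabilities depend adaptively on the structure beneath each node --- biasing upgrades toward equality tests whose key actually lies within the range of many high-weight descendant $\notakey$-leaves, so that ``useful'' splits are performed more often than ``useless'' ones --- and showing that the per-node contributions telescope along every root-to-leaf path to at most $1$ in expectation. Designing this adaptive biasing and verifying the telescoping is precisely the ``novel probabilistic argument'' promised in the introduction.
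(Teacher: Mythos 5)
Your high-level strategy (build a random \twoWCSTAllLoc and invoke linearity of expectation) matches the paper's, and your structural observations about $\notakey$-leaves are correct. But the proposal has a genuine gap at exactly the point you flag as ``the technical heart'': the choice of upgrade probabilities $\{p_v\}$ and the verification that every query's expected depth increases by at most $1$ are not supplied, and your own estimate shows the uniform-probability version fails (it leaves an $\Omega(\log|E_L|)$ residual from the terminal distinguishing subtrees). There is no evidence that an adaptive biasing scheme of the kind you sketch can be made to telescope to $1$; in particular, any leaf $L$ whose block contains keys that were \emph{not} upgraded must still pay for a distinguishing subtree over those keys, and that cost is charged to the non-key queries at $L$ regardless of how the coins upstream were biased. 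So the proposal, as written, is a plan for a proof rather than a proof, with the decisive step missing.

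The paper's mechanism is different in a way worth internalizing: it does \emph{not} probabilistically sparsify the splits. It splits at \emph{every} equality node $\query = \keyB$ (processing the tree bottom-up), so no residual distinguishing subtrees remain, and it controls the cost not through coin-flip probabilities but through \emph{local rebalancing}: a ten-case analysis of the short prefix of the path from the equality node toward its associated $\notakey$-leaf $\ell_\keyB$ shows that the new node $\query < \keyB$ can almost always be absorbed by a rotation that leaves all leaf depths unchanged except for the key $\keyB$ itself (whose depth may grow by $1$, but only once, at its own equality node) or for the queries in $\ell_\keyB$ when that leaf is immediately adjacent (again at most once per non-key query, by an inductive argument about which conversions can produce shallow leaves). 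Randomness enters only in a single case where two symmetric rebalancings exist and a uniform coin flip makes the depth change zero in expectation. If you want to salvage your approach, the missing idea to import is that depth increases must be paid for by rotations at conversion time, not amortized against a tunable probability of performing the split.
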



\begin{proof} 
  Let $T^\ast$ be a given \twoWCSTAllNil.  Without loss of generality assume that  $T^\ast$ is non-redundant. 
  We describe a randomized algorithm that
  converts $T^\ast$ into a random \twoWCSTAllLoc tree $T'$ for $(\baralpha,\barbeta)$ with expected cost
  $E[\cost{T'}] \le \cost{T^\ast} + 1$.  Since the average cost of a random tree $T'$ satisfies this inequality, 
  some \twoWCSTAllLoc tree $T''$ must exist satisfying  $\cost{T''}\le \cost{T^\ast}+1$, proving the theorem.
  Our conversion algorithm starts with $T = T^\ast$ and then gradually
  modifies $T$, processing it bottom-up, and eventually produces $T'$.

  For any key $k\in\Keys$,
  let $\ell_k$ denote the unique $\notakey$-leaf at which a  search for $k$ \emph{starting in the no-child of $\compnode = k$}
  would end. Say that a leaf $\ell$ has a \emph{break due to $k$}
  if $k$ separates the query set $\queries \ell$ of $\ell$;
  that is, $\exists\query,\query'\in \queries \ell$ with $\query<k<\query'$.
  Note that $\ell_k$ is the only leaf in the tree that can have a break due to $k$.

In essence,  the algorithm converts $T^\ast$ into a (random) \twoWCSTAllLoc tree $T'$  by removing the breaks one by one.
  For each equality test $\compnode = k$ in $T^\ast$, 
  the algorithm adds one less-than comparison node $\compnode < k$ near $\compnode = k$
  to remove any potential break in $\ell_k$ due to $k$.
  (Here, by ``near'' we mean that this new node becomes either the parent or a child or  a grandchild of $\compnode = k$.)
  This can increase the depth of some leaves.
  The algorithm adds these new nodes in such a way that, in
  expectation, each leaf's depth will increase by at most $1$ during the whole process.
  In the end, if $\ell$ is a $\notakey$-leaf that does not have any breaks, then
  $\queries \ell$ represents an inter-key interval. (Here we also use the assumption that $T^\ast$ is non-redundant.)
  Thus, once we remove all breaks from $\notakey$-leaves, we obtain a \twoWCSTAllLoc tree $T'$. 

    To build some intuition before we dive into a formal argument, let's consider a node
    $N = \compnode = \keyB$, where $\keyB$ is a key, and suppose that  leaf
    $\ell_{\keyB}$ has a break due to $\keyB$. The left child of $N$ is leaf $\braced \keyB$, and
    let $t$ denote the right subtree of $N$. We can modify the tree by creating a new node
    $N' = \compnode < \keyB$, making it the right child of $N$, with left and right subtrees of $N'$
    being copies of $t$ (from which redundant nodes can be removed). 
    This will split $\ell_{\keyB}$ into two copies and remove the break due to $\ell_{\keyB}$, as desired.
    Unfortunately, this simple transformation also can increase the depth of some leaves, and thus also the cost of the tree.

    In order to avoid this increase of depth, our tree modifications also involve some local rebalancing
    that compensates for adding an additional node. The example above will be handled  using a case analysis.
    As one case, suppose that the root of $t$ is a comparison node $M = \compnode \diamond \keyA$,
    where $\diamond\in\braced{<,=}$ is any comparison operator and $\keyA$ is a key smaller than $\keyB$.
    Denote by $t_1$ and $t_2$ the left and right subtrees of $M$.
    Our local transformation for this case is shown in Figure~\ref{fig:conversion cases}(b).
    It also introduces $N' = \compnode < \keyB$, as before, but makes it the \emph{parent} of $N$.
    Its left subtree is  $M$, whose left and right subtrees are $t_1$ and a copy of $t_2$.
    Its right subtree is $N$, whose right subtree is a copy of $t_2$. As can be easily seen, this modification
    does not change the depth of any leaves except for $\ell_\keyB$. It is also correct, because in the original tree a search for any query $r\ge \keyB$
    that reaches $N$ cannot descend into $t_1$.

    The full algorithm described below breaks the problem into multiple cases. Roughly, in cases
    when $\ell_{\keyB}$ is deep enough in the subtree $T^\ast_N$ of $T^\ast$ rooted at $N$, we show that $T^\ast_N$ can be rebalanced after splitting
    $\ell_{\keyB}$. Only when $\ell_{\keyB}$ is close to $N$ we might need to increase the depth of $T^\ast_N$.

  %


\begin{figure}[!t] 
   \centering
  \includegraphics[height=6.45in]{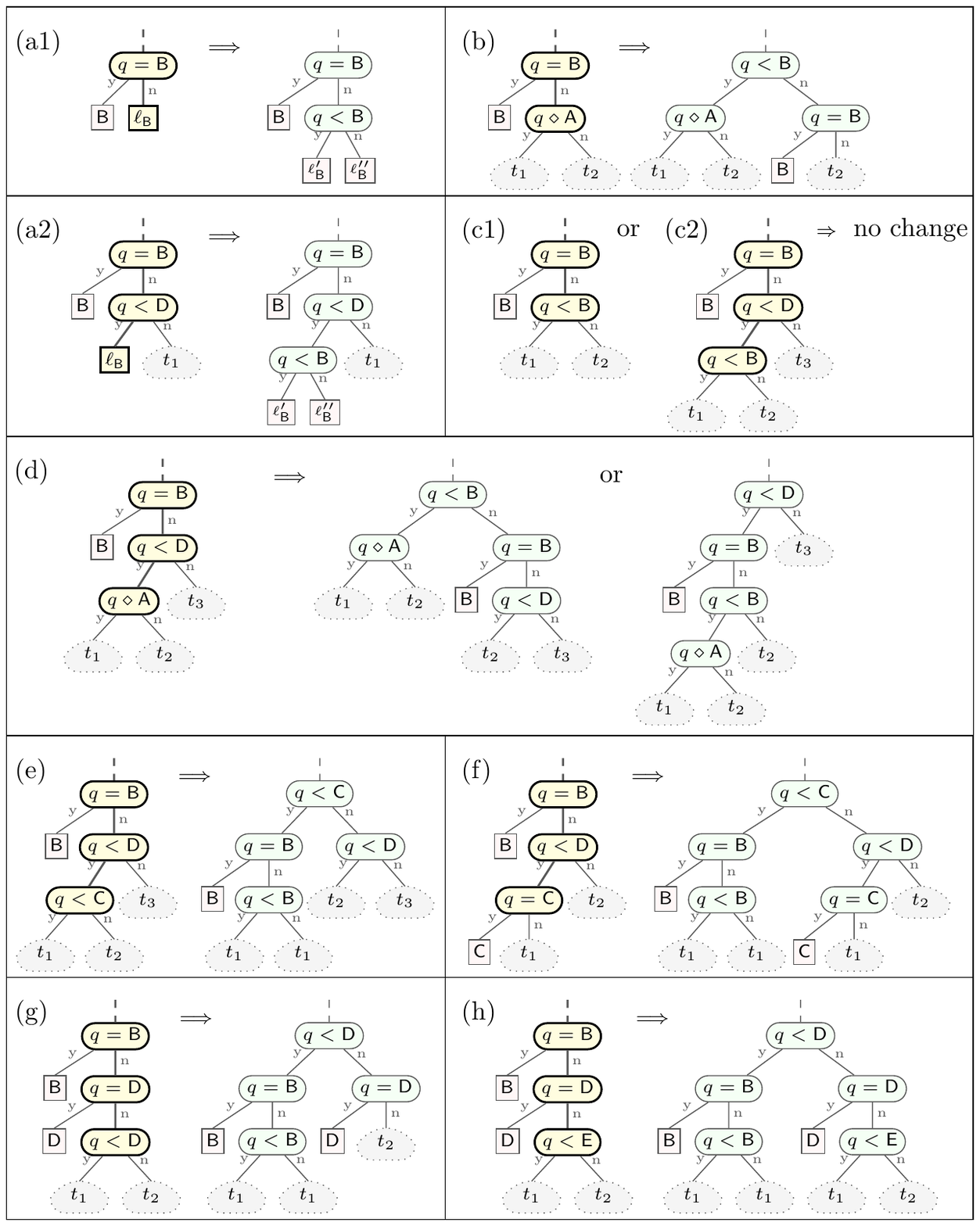}
    \caption{The ten possible types of an equality-node $N$, i.e.~$\compnode = \keyB$. For each
      type we show the conversion of its subtree $T_N$.
      Type (d) gives two possible replacements, and the algorithm chooses one randomly.
      The nodes on the prefix $P$ of the path from $N$ to leaf $\ell_\keyB$  have dark outlines.
      Along $P$, key $\keyA$ is the first key (if any) less than $\keyB$.
      Key $\keyD$ is the first key (if any) larger than $\keyB$.
      The second key (if any) larger than $\keyB$ is either $\keyC$ or $\keyE$.
      In types (b) and (d), symbol $\diamond$ is a comparison operator, $\diamond\in\braced{<,=}$.
      In cases (a1) and (a2), leaf $\ell_\keyB$ is split into two leaves
        $\ell'_\keyB$ and $\ell''_\keyB$, with appropriately modified query sets.
        In cases (b) and (d)-(h), the copies of $\ell_\keyB$ are in the duplicated sub-subtree $t_i$ of $\compnode = \keyB$.
    }\label{fig:conversion cases}
\end{figure}


  \smallskip


  \myparagraph{Conversion algorithm.}
  The algorithm processes all nodes in $T^\ast$ bottom-up, via a post-order node traversal,
  doing a conversion step \subr{Convert$(N)$} on each equality-test node $N$ of $T^\ast$.
  (Post-order traversal is necessary for the proof of correctness and analysis of cost.)
  More formally, the algorithm starts with $T = T^\ast$ and executes \subr{Process}$(T)$,
  where \subr{Process}$(T_N )$ is a recursive procedure
  that modifies the subtree rooted at node $N$ in the current tree $T$ as follows:
  \smallskip { 
    \\ \subr{Process}$(T_N)$:
    \\ \hspace*{1em} For each child $N_2$ of $N$, do \subr{Process$(T_{N_2})$}.
    \\ \hspace*{1em} If $N$ is an equality-test node, \subr{Convert}$(N)$.
  } 
  
  \smallskip
  \noindent
  (By definition, if $N$ is a leaf then  \subr{Process}$(T_N )$ does nothing.)

  \smallskip

  Procedure \subr{Process}$()$ will create copies of some subtrees and, as a result, it
  will also create redundant nodes in $T$. This might seem unnatural and wasteful, but it
  streamlines the description of the algorithm and the proof.
  Once we construct the final tree $T'$, these redundant nodes can be removed from $T'$
  following the method outlined in Section~\ref{sec: preliminaries}.

  \medskip\noindent
  Subroutine \subr{Convert}$(N)$, where $N$ is an equality-test node $\compnode = \keyB$,
  has three steps:
  {

    \begin{enumerate}
      \setlength{\itemsep}{5pt} 
    \item\label{step: first}\label{step: path} 
      Consider the path from $N$ to $\ell_\keyB$.
      Let $P$ be the prefix of this path that starts at $N$ and continues just until $P$ contains either
      {
        \begin{description}
          \setlength{\itemsep}{0pt}
      	\item{(i)} the leaf $\ell_\keyB$, or
        \item{(ii)} a second comparison to key $\keyB$, or
        \item{(iii)} any comparison to a key smaller than $\keyB$, or
        \item{(iv)} two comparisons to keys (possibly equal) larger than $\keyB$.  
        \end{description}
      }
      Thus, prefix $P$ contains $N$ and at most two other nodes.
      In case (iii), the last node on $P$ with comparison to a key smaller than $\keyB$ will be denoted
      $\compnode \diamond \keyA$, where  $\diamond \in \braced{=,<}$ is the comparison operation and	$\keyA$ is this key. 
      If $P$ has a comparison to a key larger than $\keyB$, denote the first such key by $\keyD$;
      if there is a second such key, denote it $\keyC$ if smaller than $\keyD$, or $\keyE$ if larger.
    \item\label{step: type}  
      Next, determine the \emph{type} of $N$. The type of $N$ is 
      whichever of the ten cases (a1)-(h) in Fig.~\ref{fig:conversion cases}  matches prefix $P$.
      (We show below that one of the ten must match $P$.) 
    \item\label{step: last}\label{step: convert} 
      Having identified the type of $N$, replace the subtree $T_N$ rooted at $N$ (in place)
      by the replacement for its type from Fig.~\ref{fig:conversion cases}. 

    \end{enumerate}
  }

  For example, $N$ is of type (b) if the second node $N_2$ on $P$ does a comparison to a key less than $\keyB$;
  therefore, as described in (\ref{step: path})\,(iii) above, $N_2$ is of the form $\compnode \diamond \keyA$.
  For type (b), the new subtree splits $P$ by adding a new comparison node $\compnode < \keyB$,
  with yes-child $N_2$ and no-child $N$, with subtrees copied appropriately from $T_N$.  
  (These trees are copied as they are, without removing redundancies. So after the reduction
  the tree will have two identical copies of $t_2$.)
  For type (d), there are two possible choices for the replacement subtree.
  In this case, the algorithm chooses one of the two uniformly at random.

  Intuitively, the effect of each conversion in Fig.~\ref{fig:conversion cases} is that leaf $\ell_\keyB$
    gets split into two leaves, one containing the queries smaller than $\keyB$ and the other containing the queries larger than $\keyB$.
    This is explicit in cases~(a1) and~(a2) where these two new leaves are denoted $\ell'_\keyB$ and $\ell''_\keyB$,
    and is implicit in the remaining cases.
  The two leaves resulting from the split may still contain other breaks, for keys of equality tests above $N$.
  (If it so happens that $\keyB$ equals $\min\,\queries{\ell_\keyB}$ or $\max\,\queries{\ell_\keyB}$,
  meaning that $\keyB$ is not actually a break,  then the query set of one of the resulting leaves will be empty.)

  This defines the algorithm.  Let $T' =$ \subr{Process}$(T^\ast)$ denote the random tree it outputs.
  As explained earlier, $T'$ may be redundant.


  \medskip
  \emparagraph{Correctness of the algorithm.}
  By inspection, \subr{Convert}$(N)$ maintains correctness
  of the tree while removing the break for $N$'s key $\keyB$, without introducing any new breaks.
  Hence, provided the algorithm completes successfully, the tree $T'$ that it outputs is a correct tree.
  To complete the proof of correctness,  we prove the following claim.


  \begin{claim}\label{cla: cases exhaustive}
    In each call to \subr{Convert}, some conversion (a1)-(h) applies.
  \end{claim}


  \begin{proof}
    Consider the time just before Step~\eqref{step: convert} of \subr{Convert}$(N)$.
    Let key $\keyB$, subtree $T_N$, and path $P$ be as defined for steps~\eqref{step: first}--\eqref{step: last} in 
    converting $N$.  Recall that $N$ is $\compnode = \keyB$. 
    Assume inductively that each equality-test descendant of $N$, when converted, had one of the ten types.
    Let $N_2$ be the second node on $P$, $N$'s no-child.  Let $N_3$ be the third node, if any.
    We consider a number of cases.
    
    \begin{outercases}
    \item[1. $N_2$ is a leaf] 
      Then $N$ is of type (a1).  

    \item[2. $N_2$ is a comparison node with key less than $\keyB$]
      Then $N$ is of type (b).
      
    \item[3. $N_2$ is a comparison node with key $\keyB$]
      Then $N_2$ cannot do an equality test to $\keyB$,
      because $N$ does that, the initial tree was irreducible,
      and no conversion introduces a new equality test. So $N$ is of type (c1).
      
    \item[4. $N_2$ is a comparison node with key larger than $\keyB$]
      Denote $N_2$'s key by $\keyD$. In this case $P$ has three nodes.  
      There are two sub-cases:

      \begin{innercases}

      \item[4.1. $N_2$ does a less-than test ($N_2$ is $\compnode < \keyD$)]
        By definition of $P$ and $\ell_\keyB$, the yes-child of $N_2$ is the third node $N_3$ on $P$.
        If $N_3$ is a leaf, then $N$ is of type (a2).  Otherwise $N_3$ is a comparison node.
        If $N_3$'s key is smaller than $\keyB$, then $N$ is of type (d).
        If $N_3$'s key is $\keyB$, then  $N$ is of type (c2).
        (This is because $\keyB$ has at most one equality node in $T_N$, as explained in Case~3.) 
        If $N_3$'s key is larger than $\keyB$ and less than $\keyD$, then $N$ is of type (e) or (f).

        To finish Case~4.1, we claim that \emph{$N_3$'s key cannot be $\keyD$ or larger}.
        Suppose otherwise for contradiction.  Let $N_3$ be $\compnode \diamond {\keyD'}$, where $\keyD'\ge \keyD$.
        By inspection of each conversion type,  no conversion produces an inequality root whose yes-child has larger key, 
        so $N_2$ was not produced by a previous conversion. So $N_2$ was in the original tree $T^\ast$, 
        where, furthermore, $N_2$'s yes-subtree contained a node with the key $\keyD'$.
        (This holds whether $N_3$ itself was in $T^\ast$,  or $N_3$ was produced by some conversion,
        as no conversion adds new comparison keys to its subtree.)
        This contradicts the irreducibility of $T^\ast$, proving the claim.

      \item[4.2. $N_2$ does an equality test ($N_2$ is $\compnode = \keyD$)]
        By the recursive nature of \subr{Process}$()$, the tree rooted at $N_2$ must be
        the result of applying \subr{Process}$()$ to the earlier no-child of $N = \compnode = \keyB$. 
        Further it  must be the  result of a  \subr{Convert}$()$ operation 
        (since \subr{Process}$()$ of an inequality comparison just returns that inequality comparison as root).
        Consider the previous conversion that produced $N_2$.  
        Inspecting the conversion types, the only conversions that could have produced $N_2$
        (with equality test at the root) are types (a1), (a2), (c1), and (c2).
        Each such conversion produces a subtree $T_{N_2}$
        where $N_2$'s no-child does some less-than test  $\compnode < X$ 
        to a key at least as large as the key of the root, that is $X \ge \keyD$. This node is now $N_3$.

        So, if $X = \keyD$, then $N$ is of type (g), while if $X>\keyD$, then $N$ is of type (h).
      \end{innercases}
    \end{outercases}\smallskip
    In summary, we have shown that at each step of our algorithm at least one of the
    cases in Fig.~\ref{fig:conversion cases} applies. 
    This completes the proof of Claim~\ref{cla: cases exhaustive}.
  \end{proof}


  \medskip
  \emparagraph{Cost estimate.}
  Continuing the proof of Theorem~\ref{theorem: gap bound}, we
  now estimate the cost of $T'$, the random tree produced by the algorithm. To prove $E[\cost{T'}] \le \cost{T^\ast} + 1$,
  we prove that, in expectation, the cost of each query $\altquery$ increases by at most 1.
  More precisely, we prove that for every query $\altquery\in \Queries$, 
  we have $E[\depth_{T'}(\altquery)] \le \depth_{T^\ast}(\altquery)+1$.

  Fix any query $\altquery\in \Queries$. 
  We distinguish two cases, depending on whether $\altquery$ is a key or not.

  \smallskip\noindent 
  \emph{\mycase{1. $\altquery\in \Keys$}}
  Then key $\altquery$ has one equality node $\compnode = \altquery$ in $T^\ast$.
  By inspection, each conversion (b) or (d)-(h) increases the query depth
  of the key $\keyB$ of converted node $\compnode = \keyB$ (i.e., $N$) by 1,
  and, in expectation, does not increase any other query depth.
  For example, consider a conversion of type (d).
  The depth of the root of subtree $t_1$ either increases by one or decreases by one,
  and, since each is equally likely, is unchanged in expectation.
  Likewise for $t_3$ and the first copy of $t_2$.
  The depth of the root of the second copy of $t_2$ is unchanged.
  Also, the queries $\altquery$ that descend into $t_2$ in $T_N$
  can be partitioned into those smaller than $\keyB$, and those larger.  For either random choice of replacement subtree,
  the former descend into the first copy of $t_2$,  the latter descend into the second copy.
  Hence, in expectation, if $\altquery = \keyB$ then this conversion increases the query depth of $\altquery$ by
  at most $1$,  and if  $\altquery\in \Queries -\braced{\keyB}$ then $\altquery$'s query depth does not increase.

  \smallskip

  By inspection of the two remaining conversion types, (a1) and (a2),
  each of those increases the depth of the queries in $\ell_\keyB$'s query set by 1,
  without increasing the query depth of any other query.
  Since $\altquery\in \Keys$, query $\altquery$ is not in leaf $\ell_{\keyB}$ for any such conversion.
  Hence, conversions (a1) and (a2) don't increase $\altquery$'s query depth.

  So  at most one conversion step in the entire sequence can increase $\altquery$'s query depth (in expectation)
  --- the conversion whose root is the equality-test node for $\altquery$,
  which increases the query depth by at most 1.
  It follows that the entire sequence increases the query depth of $\altquery$ by at most 1 in expectation.
  
  \smallskip\noindent 
  \emph{\mycase{2. $\altquery\not\in \Keys$}}
  In this case, $\altquery$ has no equality node in $T^\ast$.  As observed in Case~1, the only conversion step 
  that can increase the query depth of $\altquery$ (in expectation) is an (a1) or (a2) conversion of a node $\compnode = B$
  where $\ell_\keyB$ is $\altquery$'s leaf (that is, $r\in\Queries_\ell$). This step increases $\altquery$'s query depth by 1.
  
  So consider the tree just before such a conversion step applied to the subtree $T_N$, 
  where case~(a1) or~(a2) is applied and $\altquery$'s leaf is $\ell_\keyB$.
  We show the following property holds at that time:

  
  \begin{claim}\label{claim: no earlier step with leaf r}
    There was no earlier step whose conversion subtree contained the leaf of $\altquery$.
  \end{claim} 

  \begin{proof}
    To justify this claim, we consider cases~(a1) and (a2) separately.
    For case~(a1), $\altquery$'s leaf has \emph{no processed ancestors}. 
    (A ``processed'' node is any node in the replacement subtree of any previously implemented conversion.) 
    But there is no conversion type that produces such a leaf, proving the claim in this case.
    The argument in case~(a2) is a bit less obvious but similar: 
    in this case $\altquery$'s leaf is a yes-child and its parent is an inequality node that
    is the only processed ancestor of this leaf.
    By inspection of each conversion type, for each conversion that produces a leaf with
    only one processed ancestor (which would necessarily be the root for the
    converted subtree), this ancestor is either an equality test 
    (cases~(a1), (a2), (c1), (c2)), or has this leaf be a no-child of its
    parent (the second option of case~(d), with $t_3$ being a leaf).
    Thus no such conversion can produce a subtree of type~(a2) with $\altquery$'s 
    leaf being $\ell_\keyB$, completing the proof of the claim.
  \end{proof} 
  
  We then conclude that in this case ($\altquery\not\in \Keys$), there is at most one step
  in which the expected query depth of $\altquery$ can increase; and if it does, it increases
  only by $1$, so the total increase of  $\altquery$'s query depth is at most $1$ in expectation.

  \smallskip

  Summarizing, in either Case~1 or~2, the entire sequence of operations increases $\altquery$'s query depth
  by at most one in expectation (with respect to the random choices of the algorithm),
  that is $E[ \depth_{T'}(\altquery) ] \le \depth_{T^\ast}(\altquery) + 1$.
  Since this property holds for any $\altquery\in \Queries$, applying linearity of expectation
  (and using $\depth_T((i, i+1))$ to represent the depth in $T$ of queries in inter-key interval $(i, i+1)$),
  \begin{align*} 
    E[\cost{T'}]
    &{} \;=\; \textstyle
      E\Big[
      \sum_{i=1}^n \beta_i \depth_{T'}(i) 
      {} +
      \sum_{i=0}^n \alpha_i \depth_{T'}((i,i+1))
      \Big]
      \smallskip \\
    &{}\; = \;\textstyle
      \sum_{i=1}^n \beta_i\, E[\depth_{T'}(i)]
      {} +
      \sum_{i=0}^n \alpha_i\, E[\depth_{T'}((i,i+1))]
      \smallskip \\
    & {} \;\le\; \textstyle
      \sum_{i=1}^n \beta_i (1+ \depth_{T^{\ast}}(i))
      {} +
      \sum_{i=0}^n \alpha_i (1+ \depth_{T^{\ast}}((i,i+1)))
      \smallskip \\
    & {} \;=\;
      1 + \cost {T^\ast}.
  \end{align*}
  This completes the proof of Theorem~\ref{theorem: gap bound}.
\end{proof}




\section{Application To Entropy Bounds}%
\label{sec: Entropy}


In general, a search tree determines the answer to a query from a set of some number $m$ of possible answers.
In the successful-only model there are $n$ possible answers, namely the key values.
In the general $\twoWCSTAllLoc$ model 
there are $2n+1$ answers: the $n$ key values and the $n+1$ inter-key intervals. 
  In the $\twoWCSTAllNil$ model  there are $n+1$ answers: the $n$ key values and $\notakey$.
Let  $p$ be a probability distribution on the $m$ answers, namely $p_j$ is the probability that the answer
to a random query should be the $j$th answer.
It is well-known that any  binary-comparison search
tree $T$ that returns such answers in its leaves satisfies
$\cost{T} \ge  H(p)$, where $H(p) = \sum_j p_j \log_2 \frac 1 {p_j}$ is the \emph{Shannon entropy} of $p$.  
This fact is a main tool used for lower bounding the  optimal cost of search trees~\cite{ahlswede1987search}.

The entropy bound can be weak when applied directly to  \twoWCSTAllNil's.
To see why, consider a probability distribution $(\baralpha,\barbeta)$ on keys and inter-key intervals.
Since \twoWCSTAllNil's do not actually identify inter-key intervals, the answers associated with a \twoWCSTAllNil
are the key values and the $\notakey$ symbol representing the ``not a key'' answer, so the corresponding distribution is $(A,\barbeta)$,
for  $A = \sum_i \alpha_i$. Thus the entropy lower bound is
\begin{equation*}
  \cost{T^\ast} \;\ge\; H(A,\barbeta) = A \log_2 \frac 1 A + \sum_i \beta_i \log_2 \frac  1 {\beta_i}
\end{equation*}
for any \twoWCSTAllNil tree $T^\ast$. On the other hand,   by Theorem~\ref{theorem: gap bound}, $\cost{T^\ast} \ge \cost{T'} -1$
  for some \twoWCSTAllLoc tree $T'$.
  The entropy lower bound $\cost {T'} \ge H(\baralpha, \barbeta)$ applies to $T'$, giving the following lower bound:
  \begin{corollary}\label{cor}
    For any \twoWCSTAllNil tree $T^\ast$ for any input $(\baralpha, \barbeta)$,
    \(
    \cost{T^\ast}     \,\ge\,  H(\baralpha,\barbeta) -1.
    \)
  \end{corollary}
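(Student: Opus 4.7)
The plan is to derive the corollary as a one-line combination of Theorem~\ref{theorem: gap bound} with the classical Shannon entropy lower bound applied to \twoWCSTAllLoc trees. First, I would invoke Theorem~\ref{theorem: gap bound} to produce, for the given \twoWCSTAllNil tree $T^\ast$ on input $(\baralpha,\barbeta)$, a \twoWCSTAllLoc tree $T'$ on the same instance satisfying $\cost{T'} \le \cost{T^\ast} + 1$, equivalently $\cost{T^\ast} \ge \cost{T'} - 1$.

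Next, I would apply the entropy lower bound to $T'$. The key point is that $T'$ is a \twoWCSTAllLoc, so (in its non-redundant form) it has exactly $2n+1$ leaves, one per key and one per inter-key interval, and the induced probability distribution on these leaves is exactly $(\baralpha,\barbeta)$. The standard argument — Kraft's inequality gives $\sum_L w_L\,2^{-\depth_{T'}(L)} \le 1$, then Jensen's inequality applied to $-\log_2$ — yields $\cost{T'} \ge H(\baralpha,\barbeta)$. Chaining the two inequalities gives $\cost{T^\ast} \ge H(\baralpha,\barbeta) - 1$.

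The only subtle point, and the reason this corollary is worth stating, is why the entropy bound cannot simply be applied directly to $T^\ast$ with $H(\baralpha,\barbeta)$ on the right-hand side. In a \twoWCSTAllNil, many inter-key intervals may be absorbed into a single $\notakey$-leaf, so the natural distribution on the leaves of $T^\ast$ is $(A,\barbeta)$ with $A=\sum_i \alpha_i$, which only produces the weaker bound $H(A,\barbeta)$ quoted in the paragraph above. Theorem~\ref{theorem: gap bound} is precisely what lets us pay a one-time additive cost of $1$ to ``refine'' the $\notakey$-leaves of $T^\ast$ into proper inter-key-interval leaves, after which the stronger entropy $H(\baralpha,\barbeta)$ becomes available. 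Since both inputs to the chain of inequalities are already in hand (Theorem~\ref{theorem: gap bound} and the textbook entropy bound for trees with distinct leaf answers), there is no real obstacle; the proposal is essentially a two-line argument.
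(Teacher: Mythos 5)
Your proposal is correct and is exactly the paper's argument: Theorem~\ref{theorem: gap bound} yields a \twoWCSTAllLoc tree $T'$ with $\cost{T^\ast} \ge \cost{T'} - 1$, and the standard entropy bound $\cost{T'} \ge H(\baralpha,\barbeta)$ applies to $T'$ because its leaves realize the full distribution $(\baralpha,\barbeta)$. The additional remarks on Kraft/Jensen and on why the direct bound on $T^\ast$ only gives $H(A,\barbeta)$ match the paper's surrounding discussion.
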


  To see that this bound can be stronger,  consider the following extreme example.
  Suppose that $\beta_k = 1/n^2$ for all $k$, and
  that $\alpha_i  = \frac{1}{n+1}  \left(1 - \frac{1}{n}\right)$ for all $i$.
  Then $A = 1 - \frac  1 n$,   $\sum_k \beta_k \log_2 \frac {1}{\beta_k} = \Theta(\log_2(n)/n)$,  
  and   $\sum_i \alpha_i \log_2 \frac {1}{\alpha_i} = \log_2 n - O(\log(n)/n)$.
  The direct entropy lower bound, $H(A, \barbeta)$, is
  \begin{equation*}
    A \log_2 \frac 1 A +  \sum_k \beta_k \log_2 \frac 1 {\beta_k} 
    \;=\; \Theta\Big(\frac{1}{n}\Big) + \Theta\Big(\frac{\log n}{n}\Big) 
    \;=\; o(1).
  \end{equation*}
  In contrast the lower bound in Corollary~\ref{cor} is
  \begin{equation*}
    -1 + \sum_i \alpha_i \log_2 \frac 1 {\alpha_i}   +   \sum_k \beta_k \log_2 \frac 1 {\beta_k}
    \;=\;
    \log_2(n) - o(1) - 1,
  \end{equation*}
  which is tight up to lower-order terms.

    Generally, the difference between the lower bound from Corollary~\ref{cor}
    and the direct entropy lower bound is $A\, H(\alpha/A) - 1$.    This is always at least $-1$.
    A sufficient condition for the difference to be large is that $A = \omega(1/\log n)$,
    with $\Omega(n)$ $\alpha_i$'s distributed more or less uniformly
    (i.e., $\alpha_i / A = \Omega(1/n)$),    so $H(\alpha/A) = \Theta(\log n)$.



\section{Final Comments}%
\label{sec: Final Comments}


  The proof of Theorem~\ref{theorem: gap bound} is quite intricate. It would be worthwhile to
  find a more elementary argument. We leave this as an open problem.

  We should point out that bounding the gap by a constant \emph{larger} than $1$ is considerably easier. 
  For example, one can establish a constant gap result by following the basic idea of our 
  conversion argument in Section~\ref{sec: gap bound} but using only a few simple rotations to
  achieve rebalancing. (The value of the constant may depend on the rebalancing strategy.)
  Another idea involves ``merging'' each key $k$ in $T^\ast$ and the adjacent failure interval $(k, k+1)$ into one
  key with probability $\beta_k+\alpha_k$, computing an optimal (successful-only) tree $T'$ for these new merged keys, and
  then splitting the leaf corresponding to this new key into two leaves, using an equality comparison.
  A careful analysis using the Kraft-Mcmillan inequality and the construction of alphabetic trees in~\cite[Theorem 3.4]{ahlswede1987search}
  shows that $\cost{T'} \le \cost{T^\ast} + 1$,
  proving a gap bound of 2.  (One reviewer of the paper also suggested this approach.)
  Reducing the gap to $1$ using this strategy does not seem possible though, as the second step
  inevitably adds $1$ to the gap all by itself.

  Theorem~\ref{theorem: gap bound} assumes that the allowed comparisons are ``$=$'' and ``$<$'', 
  but the proof can be extended to also allow comparison ``$\le$'' (that is, each comparison may be any of  $\{=,<,\le\}$)
  by considering a few additional cases in Figure~\ref{fig:conversion cases}. In the model with three comparisons, we do not know
  whether the bound of $1$ in  Theorem~\ref{theorem: gap bound} is tight.

  One other intriguing and related open problem is the complexity of computing optimum \twoWCST's. 
  The fastest algorithms in the literature for computing such optimal trees run 
  in time $\Theta(n^4)$~\cite{Anderson2002,chrobak_etal_optimal_search_trees_2015,chrobak2015optimal_erratum,chrobak_etal_simple_bcst_algorithm_2019}.
  Speed-up techniques for dynamic programming based on Monge properties or quadrangle inequality, now standard,
  were used to develop an $O(n^2)$ algorithm for computing optimal \threeWCST's~\cite{Knuth1971}.
  These techniques do not seem to apply to \twoWCST's, and new techniques would be needed to reduce the
  running time to $o(n^4)$.



  \bigskip
  \paragraph*{Acknowledgments}
  We are grateful to the anonymous reviewers for their numerous and insightful comments
  that helped us improve the presentation of our results.



\bibliographystyle{plainurl}
\bibliography{optimal_search_trees}


\end{document}